\newtheorem{definition}{Definition}
\newtheorem{hypothesis}{Conjecture}
\newenvironment{rem}{\textbf{Remark \arabic{section}.%
\addtocounter{rem}{1}\arabic{rem}.}}{\hfill $\blacksquare$}
\newcounter{rem}[section]
\newcommand{\df}{\stackrel{\textrm{def}}{=}}
\newtheorem{cor}{Corrolary}[section]
\newtheorem{theorem}{Theorem}[section]
\newcommand{\eqdef}{\overset{def}{=}}
\def\maxl{\max\limits}
\def\minl{\min\limits}
\def\supl{\sup\limits}
\title{Epsilon-complexity of continuous functions}
\author{Boris Darkhovsky, Alexandra Pyriatinska}
\date{}
\begin{document}
\maketitle

\begin{abstract}
A formal definition of   $\epsilon$-complexity of an individual  continuous function defined on a unit cube is proposed. This definition is consistent with the   Kolmogorov's idea  of  the complexity of an object.
 A definition of  $\epsilon$-complexity for  a class of continuous functions with a given modulus of continuity is also proposed. Additionally, an explicit formula for  the $\epsilon$-complexity of a functional class is   obtained. As a consequence, the paper finds that the $\epsilon$-complexity for the H\" older class of functions can be characterized by a pair of real numbers. Based on these results the papers formulates a conjecture  concerning the $\epsilon$-complexity of an individual  function  from the H\"older class. We also propose a conjecture about characterization of  $\epsilon$-complexity  of a  function  from the H\"older class given on a discrete grid.
\end{abstract}

Keywords: continuous functions, complexity.


\section{Introduction}
The concept of   "complexity" of  an object is one of the fundamental scientific  paradigms.  There are numerous  efforts   in the literature to  define  the  complexity properly. There are many  attempts
to apply it in practice as well.

One of the first  efforts to provide the  quantitative approach to  the concept of   "complexity of a  physical system" was made in 1870s by an Austrian  physicist Ludwig Boltzmann who  had  introduced the notion of   entropy in  equilibrium statistical physics. The greater the  entropy, the more "complicated"  the system is.

In  1940s Claude Shannon developed   the  information theory using  the   concept of  entropy of a probability distribution. He interpreted the entropy as  a measure of the "degree of uncertainty" which is peculiar to a particular probability distribution.  Under natural conditions  this  measure  was proven to be unique. It is known that the number $N$ of "typical" trajectories of a stationary ergodic random sequence with the sample size $n$ can be expressed  by the formula  $N\sim\exp(nH_s)$, when $n\to\infty$. Here, $H_s$ is the Shannon entropy of the underlying distribution (see details in, e.g. \cite{ref1}).
Hence,  the higher the entropy the more "complicated" the system is.

 Kolmogorov and Sinai (see, e. g., \cite{ref2}) introduced the   concept of entropy in the theory of dynamical systems.  In fact, their definition was  a generalization of the Shannon entropy.  The dynamical system's entropy  is  determined by the large-time asymptotic behavior of the coefficient  appearing in the logarithm of the number of different types of   trajectories of a dynamical system.   Again, the entropy of a dynamical system may serve as a measure of its "complexity":  the more "complex"  the  system, the richer the variety  of its trajectories.

Since   functions  are some  of the most basic mathematical objects,  the question of how to define complexity of a continuous function is quite natural.
It is also important for  practical applications. In particular, the quantitative characterization of complexity of a  continuous function  could be used to solve the   problem of data segmentation.  Consider, for example, the time series generated by different and unknown mechanisms (either stochastic, or  deterministic; we shall call such data \emph{non-homogeneous}). To analyze and model non-homogeneous data  it is necessary  to perform their segmentation first.

In order to estimate complexity of a continuous function, one can try to use the  Shannon entropy approach, but from our point of view, this approach is not suitable.
Indeed, let us consider  the function  $x(t)=t, \quad t\in [-1,1]$. Obviously, the distribution of the values of this function is uniform on the interval $[-1,1]$ . Therefore, formally calculated,   Shannon entropy for a discrete distribution of the function's values  on a uniform grid is maximal.  Hence, the complexity is maximal if it is measured by the entropy. But, in fact, a straight  line is a  very simple object, which is completely defined by two points.

From another point of view,   using the concept of entropy of a   dynamical system    is also inappropriate if one wants to estimate the  complexity of a continuous function.
 Indeed,  in the modern theory of dynamic systems it is assumed that their law of evolution  does not change over time. However,   non-autonomous ordinary differential equations do not satisfy this condition, and estimation of the complexity of continuous functions generated by these equations  is not covered by the theory of dynamical systems. Moreover, not every continuous function is generated by a dynamical system.

So,  to the best of our knowledge, the existing complexity theory provides no satisfactory method to estimate the complexity of a continuous function. Moreover,
from our point of view it is essential that   the proper definition of the complexity of a function  should not depend  the mechanism generating a function.

In the middle  of 1960s, Kolmogorov  suggested an algorithmic
approach to the notion of object's complexity. The main idea of this
approach (see \cite{ref3}) is as follows:  {\it  A "complex" object  requires a lot of
information for its reconstruction  and,   for a "simple" object,    little information is needed}.
He  formalized this idea in the language of the  theory of algorithms.
In particular, the algorithmic complexity measures the length of the program  which leads to the selection of a particular object from a set of objects.
This   approach   is  closest  to our definition of the  complexity of a continuous function.

The first-named author, see Darkhovsky \cite{ref4},   proposed to measure the \emph{$\epsilon$-complexity} of a continuous function by  the number of its values (given on uniform grid) which are  required to its reconstruction  by fixed family of approximation methods with a given marginal error $\epsilon$.  This approach  was successfully  pre-tested on the human electroencephalographic data \cite{ref4}.

In this paper, we further develop and modify this concept. \emph{The main result of this paper}
provides an effective characterization of the $\epsilon$-complexity for  a class of continuous functions given on a unit cube in the finite dimensional Euclidean space.
Specifically, we prove that, for  H\"older class functions,  \emph{there exists an affine relationship between the $\epsilon$-complexity and the logarithm  of the function reconstruction  error $\epsilon$.
}
 In other words, the $\epsilon$-complexity of the  H\"{o}lder class functions can be characterized by a pair of real numbers.

The above result, leads us to   formulate the following conjecture:
  \emph{
The $\epsilon$-complexity of an individual function from the H\"older class also has, in logarithmic coordinates,  an affine dependence   on $\epsilon$,  and
   also  can be characterized by a pair of real numbers.}   

%

This conjecture is supported by preliminary simulations.

The paper is organized as follows. In Section 2, we propose a definition of  $\epsilon$-complexity of a continuous function given on a unite cube in the finite dimensional Euclidean space. In Section 3, we give a definition
 of the $\epsilon$-complexity for a class of functions with a fixed modulus of continuity, and prove the theorem regarding the  $\epsilon$-complexity of a  functional class. The corollary of this theorem gives a characterization of the $\epsilon$-complexity for the H\"older class of functions. In this section, we also formulate  the  conjecture which characterized the $\epsilon$-complexity of an individual function from the H\"older class.

In Section 4, we introduce a definition  of the $\epsilon$-complexity of an individual continuous function given by its values on a discrete grid. In Section 5, we discuss the computational aspects of $\epsilon$-complexity's evaluation,  and formulate  our basic conjecture. This conjecture gives a numerical characterization of the $\epsilon$-complexity. 
Finally,  conclusions are provided in Section 6.


\section{Complexity  of an individual   function}
Without loss of generality we assume that a continuous function $x(t)$ is defined on a unit cube $\mathbb{I}$ in the  space $\mathbb{R}^k$ . On the set of such functions we introduce a norm $\|\cdot \|$. To be able to compare   complexity of different functions, it is reasonable to assume that $\|x(t)\|$=1, (i.e., essentially,  to consider   $x(t)/\|x(t)\|$ instead of $x(t)$).

Let $\mathbb{Z}_h$ be  a $k$-dimensional grid with spacing  $h$, and
$\mathbb{I}_h=\mathbb{I}\cap\mathbb{Z}_h$. Assume  that we  only know    the values  of $x(t)$ at  points of the set $\mathbb{I}_h$. Given this information, with what precision can we  reconstruct the function $ x(t)$?

Suppose we  have  a fixed set  of approximation  methods $\mathcal{F}$ of functions  with values given only at  the points of    $\mathbb{I}_h$.
Let  $\hat{x}(t)$ be an approximation which is  constructed using one of the  allowable methods of approximation. Consider the approximation error
$$
\delta(h)=\inf_{\mathcal{F}}\|x(t)-\hat{x}(t)\|,
$$
where  the infimum is taken over the whole set  $\mathcal{F}$ .

It is clear that  the  function $\delta(h)$  is nondecreasing: the increase  of the  grid spacing means that we discard more and more information about the function values.
If we  fix a certain "acceptable" (user-specified) error level,  $\epsilon\ge 0$, then we can determine  the  fraction of the
 function values that could be discarded while still permitting reconstruction of  the original function (again, via  the  fixed family $\cal F$ of approximation methods) with error not exceeding $\epsilon$.
Note that, in general,  the  approximation error should be related  to the norm of the function  but,  since we assume that the  function is  normalized, $\delta (h)$   really measures the  relative error.

Let
$$
 h^*(\epsilon)=\left \{\begin{array}{ll}
                        \inf\{h\le 1:\delta(h)>\epsilon\}, & \text {if}\quad \{h:
                        \delta(h)>\epsilon\} \ne \emptyset \\
                        1,& \text{if the set is empty}
                      \end{array}
                      \right.
\eqno(1)
$$
Hence, $h^*(\epsilon)$ is the minimum grid
spacing  guaranteeing
that the error of the function reconstruction from its
values on the grid exceeds a given $\epsilon$.

The value $(1/h^*(\epsilon))^k$ estimates the number of points in the set $\mathbb{I}_{h^*(\epsilon)}$ that must be   retained to achieve    a given approximation error, and it is natural to use the quantity $ {1}/{h^*(\epsilon)}$ to define   \emph{a measure of function complexity}.
There is some flexibility here since
as a quantitative measure of the $\epsilon$-complexity we can employ any monotonically increasing function of
 $1/h^*(\epsilon)$. However, as we shall see below,  use of the logarithmic function enables us to get a particularly effective characterization of the complexity. Thus we introduce  the following definition:

\begin{definition}
The number
$$
S(\epsilon, \mathcal{F}, \| \cdot \|)\eqdef S(\epsilon)=\log \frac{1}{h^*(\epsilon)}
$$
is called  $(\epsilon, \mathcal{F}, \|\cdot\|)$-complexity (or, briefly, $\epsilon$-complexity) of an individual function $x(t)$ .
\end{definition}

In other words, $\epsilon$-complexity of a continuous function
on a segment is the (logarithmic) fraction of the function
values that must be retained to reconstruct the function
via a certain fixed family of approximation methods with
a given error.

Note that   $\epsilon$-complexity is a continuous functional on  the  space of continuous functions equipped with  the  norm which was used to define  the approximation error.

It is natural to assume that $\mathcal{F} $ contains  at least the method of approximation of functions  via affine functions of the form $at+b$. In this case,  if $x(t)$ itself is an affine function on $\mathbb I^k$, then its error-free recovery   requires knowledge of  $(k+1)$ of its values on linearly independent points.  But
$\sharp (\mathbb{I}_h)\ge(k+1)$ for any  $0<h\le 1$.
Therefore, according to Definition 1,  for any affine function we have $h^*(0)=1$, and its  $0$-complexity $S(0)$   is equal to zero.

Note,  that the  proposed  measure of   complexity  is an individual characteristic of a particular function, rather than of a set of functions generated by a certain  mechanism (as is the case   of the entropy of a dynamical system).
Furthermore, this measure does not depend on the mechanisms generating the function. It is insensitive to whether the function is a sample path of a random field/process, or a trajectory of a dynamical system.


\section{Complexity of a functional class}
Let $C$ be  the space of continuous functions  with the standard norm,  $\|x(\cdot)\|_{\scriptscriptstyle C}=\maxl_{t\in\mathbb{I}}|x(t)|$. Denote by
$$
\omega_x(h)=\max_{(t,s)\in \mathbb{I}, \|t-s\|\le h}|x(t)-x(s)|
$$
the modulus of continuity of the function $x (t)$. It is well known that the   function $\omega(\cdot)$ is continuous and non-decreasing.

Let $U\subset X_{\omega}$ be an arbitrary bounded set in the class  $X_{\omega}$  of all functions with a given   modulus of continuity $\omega(\cdot)$,  and let $R\df\supl_{x(\cdot)\in U}\|x(\cdot)\|$.
We define the $\epsilon$-complexity, $S_{cl}^{U}(\epsilon,\omega)$, of the set $U$
as follows:

\begin{definition}
The number
$$
S_{cl}^{U}(\epsilon, \omega)=\frac{1}{R}\log \big(1/h(\epsilon)\big),
$$
where $h(\epsilon)$ is the the grid spacing  such that
      the maximum (over the set  $U\subset X_{\omega} $)   error of the optimal  function reconstruction using its values on the grid does not exceed $\epsilon$,
      is called $\epsilon$-complexity of the set $U\subset X_{\omega}$.
\end{definition}

Thus, to estimate $S_{cl}^U(\cdot)$ we have to find the minimum of the maximal (over all functions from $U\subset X_{\omega}$) error of the function reconstruction from
a given class using its values on the grid with spacing $h$  (we call the corresponding error the \emph{minimax reconstruction error}).

\medskip

\begin{rem}
For any individual function it is natural to calculate the relative error of the function reconstruction, i.e. the error which is scaled to the norm of the function. But, for any bounded set from the class of continuous functions, the relative reconstruction error for the class must be calculated as the ratio of the minimax reconstruction  error to the maximal norm of the functions from the given set. Therefore, to calculate $S_{cl}^U(\cdot)$ we have to  consider the absolute minimax reconstruction error.
\end{rem}

\begin{theorem}
Let us assume that  the reconstruction error is  measured in the uniform norm $\|\cdot\|_{\scriptscriptstyle C}$, and that the modulus of continuity $\omega(\cdot)$ has  the inverse (i.e., it id  strictly  increasing).
Then, the complexity $S_{cl}^U(\cdot)$ of any bounded set $U\subset X_{\omega}$ is expressed by the following relationship:
$$
S_{cl}^U(\epsilon,\omega)=\frac{1}{R}\log\frac{\sqrt{k}}{2\omega^{-1}(\epsilon)}
\eqno(2)
$$
\end{theorem}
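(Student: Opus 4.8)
The plan is to reduce the statement to a single computation of the minimax reconstruction error and then invert $\omega$. By Definition 2 we have $S_{cl}^U(\epsilon,\omega)=\frac{1}{R}\log\bigl(1/h(\epsilon)\bigr)$, so the claimed formula (2) is equivalent to the assertion that the grid spacing at which the minimax reconstruction error (in the norm $\|\cdot\|_{\scriptscriptstyle C}$) first reaches the level $\epsilon$ is $h(\epsilon)=2\omega^{-1}(\epsilon)/\sqrt{k}$. Thus the whole theorem comes down to showing that, for a grid $\mathbb{Z}_h$ of spacing $h$, the minimax reconstruction error over $U$ equals exactly $\omega\bigl(h\sqrt{k}/2\bigr)$; once this is in hand, setting $\omega\bigl(h\sqrt{k}/2\bigr)=\epsilon$ and using the existence of $\omega^{-1}$ solves for $h(\epsilon)$ and yields (2). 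The factor $1/R$ is carried along passively, entering only through the normalization in Definition 2.

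First I would record the elementary geometric fact behind the constant $\sqrt{k}/2$: every $t\in\mathbb{I}$ lies in some cell of $\mathbb{Z}_h$, and its distance to the nearest node is at most the half-diagonal $h\sqrt{k}/2$ of a cube of side $h$, with equality attained at a cell centre. This pins down the points that are hardest to reconstruct. I would then establish two matching bounds. For the upper bound I would exhibit an admissible reconstruction attaining error $\le\omega\bigl(h\sqrt{k}/2\bigr)$ for \emph{every} $x\in X_\omega$: nearest-node interpolation $\hat{x}(t)=x(t_0)$, where $t_0$ is the node closest to $t$, gives $|x(t)-\hat{x}(t)|\le\omega_x(\|t-t_0\|)\le\omega\bigl(h\sqrt{k}/2\bigr)$ by monotonicity of the modulus together with the geometric bound, uniformly in $t$ and in $x$.

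For the lower bound I would construct two functions that are indistinguishable from their grid values yet far apart at a cell centre $t^*$. Set $g(t)=\omega\bigl(d(t)\bigr)$, where $d(t)$ is the distance from $t$ to the nearest node, so that $g$ vanishes on every node while $g(t^*)=\omega\bigl(h\sqrt{k}/2\bigr)$. Both $g$ and $-g$ generate the same (zero) grid data, hence any reconstruction $\hat{x}$ built from grid values is the same for both, and by the triangle inequality
\[ \max\bigl(\,\|g-\hat{x}\|_{\scriptscriptstyle C},\ \|(-g)-\hat{x}\|_{\scriptscriptstyle C}\,\bigr)\ \ge\ |g(t^*)|\ =\ \omega\bigl(h\sqrt{k}/2\bigr). \]
Together with the upper bound this shows the minimax error equals $\omega\bigl(h\sqrt{k}/2\bigr)$ exactly.

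The step I expect to be the main obstacle is the lower bound, on two counts. First, one must verify that the perturbation $g=\omega\circ d$ genuinely belongs to $X_\omega$: since $d$ is $1$-Lipschitz, $|g(t)-g(s)|\le\omega\bigl(d(s)+\|t-s\|\bigr)-\omega\bigl(d(s)\bigr)$, and controlling this by $\omega(\|t-s\|)$ requires the subadditivity of the modulus of continuity. This is the only place where a structural property of $\omega$ beyond invertibility is used, and it is essential. Second, and more delicately, the functions $\pm g$ must actually lie in the prescribed bounded set $U$; since $\|g\|_{\scriptscriptstyle C}=\omega\bigl(h\sqrt{k}/2\bigr)=\epsilon$, this is legitimate in the relevant regime $\epsilon\le R$ provided $U$ is rich enough to admit such a perturbation (e.g.\ a ball of radius $R$, or any $U$ with nonempty interior around a reference function). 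It is precisely this richness that makes $h(\epsilon)$ independent of $U$ and leaves $U$ entering (2) only through the factor $1/R$.
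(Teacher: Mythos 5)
Your proof is correct and arrives at the same quantity $\delta_{cl}(h)=\omega\bigl(\sqrt{k}\,h/2\bigr)$, but by a genuinely different route. The paper argues locally, cell by cell: it fixes the $2^k$ vertex values of a single cell, identifies the set of feasible values of $x(\tau)$ as the intersection $D=\bigcap_i[x(t^i)-\omega(r_i),\,x(t^i)+\omega(r_i)]$, takes the midpoint as the minimax estimate (error $=$ half the length of $D$), observes that $D$ is longest when all vertex values coincide, and then maximizes $\min_i\omega(\|\tau-t^i\|)$ over $\tau$ to land at the cell centre. You instead prove two matching global bounds: nearest-node interpolation for the upper bound, and the explicit pair $\pm g$ with $g=\omega\circ d$ for the lower bound. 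Your version buys two things the paper leaves implicit. First, the paper never verifies attainability --- that a function in $X_\omega$ actually realizes the endpoints of $D$, i.e.\ that the feasible set is the whole intersection rather than a proper subset; your construction of $g$ supplies exactly this, and you correctly isolate subadditivity of the modulus of continuity as the structural property needed (the paper's hypotheses mention only strict monotonicity). Second, you flag that $\pm g$ must actually belong to $U$, which is a real issue: as stated the theorem claims the formula for \emph{any} bounded $U\subset X_\omega$, yet a thin $U$ (say, a single function) would have strictly smaller minimax error; both proofs tacitly assume $U$ is rich enough (e.g.\ a ball), but only yours makes the assumption visible. The paper's per-cell interval argument is shorter and makes the geometry of the constant $\sqrt{k}/2$ transparent, while yours is the more complete argument; neither step you anticipated as an obstacle actually fails.
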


\medskip

\begin{rem}
If $\omega(\cdot)$ is not strictly increasing then its inverse,  $\omega^{-1}(\cdot)$,   in the formula (2) should be replaced by the generalized inverse $\min\{h:\omega(h)=\epsilon\}$ \end{rem}

\medskip

\begin{proof}
To prove the theorem, we have to calculate a minimax reconstruction error $\delta_{cl}(h)$ for any given grid size $h$. Since, we consider the norm $\|\cdot\|_{\scriptscriptstyle C}$ it is sufficient to find the value $\delta_{cl}(h)$ only for one cell from $\mathbb{I}_h$.

Let $t^0\in\mathbb{I}_h,\,t^0=(t_1,\dots, t_k), \,\,e_i$ be a $k$-dimensional vector, whose components represent an arbitrary set of zeros and ones, i.e.
$e_1=(0, 0,\dots, 0),\dots ,e_m=(1,1,\dots,1)$  (obviously, the number $m$ of such components is equal to $2^k$). Consider the  values of some function $x(t)$ from $U\subset X_{\omega}$ on a single cell of $\mathbb{I}_h$, i.e., on the set $A_t^0\eqdef \{x(t^i)\}_{i=1}^{m}$, $t^i=t^0+he_i, \quad t^0\in \mathbb{I}_h$, and pose the problem of estimating the value of the function at an arbitrary point $\tau$ inside the cell. In other words,  we have to 
we have to solve the problem
$$
\supl_{\{x(0), x(h)\}}\supl_{x(\tau)\in U}|x(\tau)-u|)\longrightarrow \inf_{u\in \mathbb{R}}
\eqno(3)
$$
 where the internal supremum is taken over all the values of $x(\tau) \in U$, and the external supremum is taken over all admissible (in  $U$) values $\{x(t^i)\}$.

Denote by $u=\varphi (\tau)$ the value  of the optimization problem (3).  By definition, the norm of $\varphi (\tau)$ is equal to the minimax reconstruction  error $\delta_{cl}(h)$.
Let $r_i:=\|\tau-t^i\|, \,\i=1,\dots, m$. Then the set of all possible values of $x(\tau)\in U$, given  the \emph{fixed collection} $\{x(t^i)\}$, is equal to the segment
$$
D\df\bigcap_{i=1}^m [x(t^i)-\omega(r_i), x(t^i)+\omega(r_i)].
$$
and the   solution of the optimization problem
(3)  under the same conditions (i.e., the minimax estimate of the function value at point $\tau$ under a given   \emph{fixed collection} of admissible (in $U$) values $\{x(t^i)\}$ ) is the midpoint of this segment, and  the error of the approximation is equal to half the length of $D$.

It is easy to see that the length of $D$ is maximal if $x(t^i)=a=const,\,i=1,\dots,m$. Then the optimal selection  in (3) is $u_{\text{opt}}= a$,  and the value  $\varphi(\tau)= \minl_{1\le i\le m}\omega(\|\tau-t^i\|)$  does not depend on $a$.

To calculate the norm (in the space of continuous functions) of the minimax recovery error it is necessary to find the "worst" point $\tau$ in the cell, that is, a point where the function $\varphi(\tau)$ reaches its maximum.
Since $\omega(\cdot)$ is a monotonically increasing function, it is necessary to find a point $\tau$ inside of the cell such that the minimum of the distances from this point to the vertices of the cell will be the highest. It is easy to see   that such a point is the center   of the cell
$\tau^*$.
Obviously, $\|\tau^*-t^i\|=\sqrt{k}h/2$. Therefore, $\|\varphi(\cdot)\|_{\scriptscriptstyle C}=\varphi(\tau^*)=\omega(\sqrt{k}h/2)=\delta_{cl}(h)$.
Finally, to find  $h(\epsilon)$ from Definition 2 we have to solve  the equation $\omega(\sqrt{k}h/2)=\epsilon$.  The solution is  $h(\epsilon)= {2\omega^{-1}(\epsilon)}/{\sqrt{k}}$  which concludes the proof of the Theorem.
\end{proof}

\begin{cor}
 The $\epsilon$-complexity of any bounded subset $U$ of the  H\"older class functions is given  by the formula
$$
S_{cl}^U(\epsilon, H)=A+B\log \epsilon,
\eqno(4)
$$
 for some values of the coefficients $A$ and $B$.
\end{cor}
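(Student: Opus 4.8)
The plan is to specialize the general formula (2) of the Theorem to the H\"older case, where the computation collapses to an elementary manipulation of logarithms. Recall that the H\"older class with exponent $\gamma\in(0,1]$ and constant $L>0$ is precisely the class $X_\omega$ whose modulus of continuity is $\omega(h)=Lh^\gamma$. Since $\gamma>0$, this modulus is strictly increasing on $[0,1]$ and therefore admits an honest inverse, so the hypotheses of the Theorem are met and formula (2) applies verbatim; the generalized inverse of Remark 2 is not needed here.

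First I would invert the modulus: solving $\epsilon=L h^\gamma$ for $h$ gives $\omega^{-1}(\epsilon)=(\epsilon/L)^{1/\gamma}$. Substituting this into (2) yields
$$
S_{cl}^U(\epsilon,H)=\frac{1}{R}\log\frac{\sqrt{k}}{2(\epsilon/L)^{1/\gamma}}.
$$
Then I would split the logarithm of the quotient into an $\epsilon$-free part and a part proportional to $\log\epsilon$, writing the argument as $c\,\epsilon^{-1/\gamma}$ with $c=\tfrac{\sqrt{k}}{2}L^{1/\gamma}$ and using $\log\bigl(c\,\epsilon^{-1/\gamma}\bigr)=\log c-\tfrac{1}{\gamma}\log\epsilon$. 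This immediately produces the affine form (4) with
$$
A=\frac{1}{R}\Bigl(\log\frac{\sqrt{k}}{2}+\frac{1}{\gamma}\log L\Bigr),\qquad B=-\frac{1}{R\gamma}.
$$

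There is essentially no analytic obstacle, since the Theorem has already done the work of computing the minimax reconstruction error; the corollary is a direct substitution. The only points warranting care are bookkeeping the multiplicative constants $\sqrt{k}/2$ and $L$ correctly when passing to logarithms, and noting that boundedness of $U$ guarantees $R<\infty$, so that $A$ and $B$ are genuine finite constants determined by the data $(k,\gamma,L)$ and the radius $R$. In particular $B<0$, which reflects the expected behaviour that the complexity grows as the admissible error $\epsilon$ shrinks.
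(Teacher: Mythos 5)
Your proof is correct and follows exactly the same route as the paper: substitute the H\"older modulus $\omega(h)=Lh^{p}$ into formula (2), invert to get $\omega^{-1}(\epsilon)=(\epsilon/L)^{1/p}$, and expand the logarithm to read off the affine form. The paper leaves the final identification of $A$ and $B$ implicit, whereas you write them out explicitly, but there is no substantive difference.
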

\begin{proof}
By definition, for the  H\"older class functions $\omega(h)=Lh^{p}$. Therefore, $\omega^{-1}(\epsilon)=(\frac{\epsilon}{L})^{1/p}$ and we get (4) from (2).
\end{proof}

Let $x_0(\cdot)$ be an \emph{individual function} from the H\"older class $X_{H}$. Consider the set $U=\{x(\cdot)\in X_{H}: \|x(\cdot)\|_{\scriptscriptstyle C}\le\|x_0(\cdot)\|_{\scriptscriptstyle C}\}$.
Then, in the case of a sufficiently rich set  $\mathcal{F}$ of approximation methods, the $\epsilon$-complexity of an \emph{individual function} $x_0(\cdot)$  should be smaller than the $\epsilon$-complexity of the corresponding set, i.e.,  $ 0\le S_x(\epsilon,H)\le S_{cl}^U(\epsilon,H)$.
This fact justifies
the following conjecture:

\begin{hypothesis}
The $\epsilon$-complexity of an individual
function from the H\"older class satisfies (4) for some values of the coefficients $A$ and $B$.
\end{hypothesis}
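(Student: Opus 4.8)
The plan is to mirror the route from the Theorem to the Corollary, but for an individual function: reduce the conjecture to an asymptotic statement about the individual reconstruction error $\delta(h)$ and then invert it. By Definition 1, $S_x(\epsilon,H)=\log(1/h^*(\epsilon))$, where $h^*(\epsilon)$ is essentially the solution of $\delta(h)=\epsilon$. Hence it suffices to show that, for the fixed individual H\"older function $x_0(\cdot)$ and a fixed family $\mathcal F$, the error obeys
$$\log\delta(h)=a+b\log h+o(1),\qquad h\to 0,$$
for some constants $a\in\mathbb{R}$ and $b>0$ depending on $x_0$. Solving $\delta(h)=\epsilon$ and taking logarithms then gives $\log h^*(\epsilon)=(\log\epsilon-a)/b+o(1)$, so that
$$S_x(\epsilon,H)=-\frac{1}{b}\log\epsilon+\frac{a}{b}+o(1)=A+B\log\epsilon+o(1),$$
which is exactly (4) in the small-$\epsilon$ regime. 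Thus the whole problem is to establish that the individual error curve is asymptotically affine in log--log coordinates, and to identify $B=-1/b$.

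Next I would fix a concrete, sufficiently rich family $\mathcal F$ --- for definiteness local affine (or low-degree polynomial) interpolation on each cell of $\mathbb{I}_h$ --- and bound $\delta(h)$ from both sides. The upper bound is immediate from the H\"older condition $|x_0(t)-x_0(s)|\le L\|t-s\|^p$: since $\delta(h)$ is an infimum over $\mathcal F$, it does not exceed the affine-interpolation error, and repeating the single-cell estimate of the Theorem gives $\delta(h)\le c_1 L h^{p}$ with $c_1$ a purely dimensional constant. This already furnishes an affine-in-$\log\epsilon$ upper envelope for $S_x$ and is consistent with the inequality $0\le S_x\le S_{cl}^U$ noted before the conjecture. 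The delicate half is a matching lower bound $\delta(h)\ge c_2 h^{\alpha}$; here the effective exponent $\alpha$ need not equal $p$ and generally satisfies $\alpha\ge p$, since a richer $\mathcal F$ lowers the error. Such a lower bound cannot hold for every H\"older function --- an affine $x_0$ has $\delta\equiv 0$ and $S_x\equiv 0$ --- so it must encode that $x_0$ genuinely oscillates at all fine scales. I would quantify this through a local oscillation functional such as $\nu(h)=\sup_{\text{cells}}\big(\max_{\text{cell}}x_0-\min_{\text{cell}}x_0\big)$, show $\delta(h)\asymp\nu(h)$ for the chosen $\mathcal F$, and read off $\alpha$ as the scaling exponent of $\nu(h)$.

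The hard part --- and the reason the statement is offered as a conjecture rather than a theorem --- is the passage from a two-sided power \emph{bound} $c_2 h^{\alpha}\le\delta(h)\le c_1 h^{\alpha}$ to genuine \emph{asymptotic affinity} $\log\delta(h)-\alpha\log h\to\text{const}$. A two-sided bound only traps $S_x(\epsilon)$ between two parallel affine functions and still permits $\log\delta(h)$ to oscillate by a bounded amount around a line, which is insufficient for the exact form (4). Establishing that $\lim_{h\to 0}\big(\log\delta(h)-\alpha\log h\big)$ exists amounts to showing that $x_0$ carries a well-defined, scale-invariant ``complexity exponent'' --- a property one expects for self-affine objects (Weierstrass-type functions, typical paths of fractional Brownian motion), but which can fail for artificially constructed H\"older functions whose oscillation amplitude is modulated across scales. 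I therefore expect any proof to proceed either under a self-similarity or stationary-increment hypothesis on $x_0$ that forces $\nu(h)$ to scale cleanly, or in a generic (prevalent) sense, proving the limit exists for all $x_0$ outside a negligible set. Closing this gap --- pinning down the precise hypotheses on an individual function under which the log--log error curve is truly affine rather than merely sandwiched between parallel lines --- is the principal obstacle, and matches the authors' presenting the result as a conjecture supported only by simulation.
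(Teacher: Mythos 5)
The statement you were asked to prove is not proved in the paper at all: it is Conjecture~1, and the paper's entire ``argument'' for it is the one--line observation that, for $U=\{x(\cdot)\in X_H:\|x(\cdot)\|_{\scriptscriptstyle C}\le\|x_0(\cdot)\|_{\scriptscriptstyle C}\}$ and a sufficiently rich $\mathcal F$, one has the sandwich $0\le S_x(\epsilon,H)\le S_{cl}^U(\epsilon,H)=A+B\log\epsilon$, which the authors say ``justifies'' the conjecture. That gives only an affine upper envelope, not the affine form itself, and the authors explicitly defer the rest to simulation. So there is no paper proof to match your proposal against, and your proposal --- as you yourself state in its final paragraph --- is also not a proof but a reduction plus a diagnosis of the obstruction.

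Within those limits your analysis is sound and in fact sharper than the paper's justification. The reduction via Definition~1 to the asymptotics of $\log\delta(h)$, the upper bound $\delta(h)\le c_1Lh^p$ from the single-cell estimate in the proof of Theorem~3.1, and the identification of the real difficulty --- that a two-sided bound $c_2h^\alpha\le\delta(h)\le c_1h^\alpha$ only traps $S_x(\epsilon)$ between two parallel affine functions, whereas (4) requires the limit $\lim_{h\to0}\bigl(\log\delta(h)-\alpha\log h\bigr)$ to exist --- are all correct, and the last point is precisely why the statement can only be a conjecture without extra hypotheses (self-affinity, stationary increments, or genericity) on $x_0$. Two small corrections: the affine example is not a counterexample to (4) itself, since $S_x\equiv0$ is of the form $A+B\log\epsilon$ with $A=B=0$; it only rules out a uniform nontrivial lower bound on $\delta(h)$. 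And the conjecture as written asserts an exact identity rather than an asymptotic one, so even establishing the limit above would yield (4) only up to $o(1)$ as $\epsilon\to0$; note also that $h^*(\epsilon)$ in (1) is capped at $1$, so for $\epsilon$ above the threshold $\delta(1)$ the relation degenerates, which is a further reason the identity cannot hold for all $\epsilon$ literally.
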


\begin{rem}
It can be shown that relation of type (4)  also holds  if the error is measured in the norm of the space $L_p$.
\end{rem}


\section{Complexity of a continuous function given on a discrete grid}
In the majority of applications, we  deal with functions
given by their values at a discrete set of points (i.e., by a finite sample).
We still assume that this set of values is the trace of a continuous function on a lattice in the unit
cube of the $k$-dimensional Euclidean space. Let us consider how the definition of complexity
has to be adjusted to this situation.


Let $N^k$ be the number of values of the continuous function $x(t)$ on the $k$-dimensional lattice of the unit cube.
Consider  the quantity  $h^{*}(\epsilon)$ introduced in (1) and suppose that $[h^*(\epsilon)N]^k\gg 1$. It is easy to see that we can discard $[h^*(\epsilon)N]^k$ function values from each $k$-dimensional cube with the size $h^*(\epsilon)$, and  the  reconstruction error will be less or equal $\epsilon$.  In other words, the number of values sufficient for the function reconstruction with a relative error not exceeding $\epsilon$ is equal to $n^*=[N^k/[h^*(\epsilon)N]^k]$.

Hence, in accordance with the general idea of section 2.1, the $\epsilon$-complexity is a logarithmic fraction of $n^*$ and we can formulate the following definition
\begin{definition}
The value
$$
S_{\scriptscriptstyle N}(\epsilon)=\log \frac{N^k}{[ h^*(\epsilon)N]^k}
\eqno(5)
$$
is called $\epsilon$-complexity of the individual function $x(t)$, given by the set of its discrete values.
\end{definition}

The next result follows directly from (5).
\begin{theorem}
$$
\lim_{N\to\infty}S_{\scriptscriptstyle N}(\epsilon)=S(\epsilon)
$$
\end{theorem}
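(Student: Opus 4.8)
The plan is to read the stated limit as an elementary fact about the integer part, since by (5) we can write $S_{N}(\epsilon)=\log\bigl(N/[h^{*}(\epsilon)N]\bigr)^{k}$, so the whole question reduces to the behaviour of the single ratio $N/[h^{*}(\epsilon)N]$ as $N\to\infty$ with $\epsilon$, and hence $h^{*}(\epsilon)$, held fixed. Everything happens at one fixed value of $\epsilon$, so no uniformity in $\epsilon$ and no compactness argument is required.

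First I would bound the integer part from both sides. By definition of $[\,\cdot\,]$ we have $h^{*}(\epsilon)N-1<[h^{*}(\epsilon)N]\le h^{*}(\epsilon)N$, and dividing by $N$ gives $h^{*}(\epsilon)-1/N<[h^{*}(\epsilon)N]/N\le h^{*}(\epsilon)$. A squeeze then yields $[h^{*}(\epsilon)N]/N\to h^{*}(\epsilon)$. Here I would also note that $h^{*}(\epsilon)$ is a fixed number in $(0,1]$, so for all large $N$ the denominator $[h^{*}(\epsilon)N]$ is strictly positive; this is exactly the standing assumption $[h^{*}(\epsilon)N]^{k}\gg1$ made just before Definition 3, and it guarantees that the arguments of all the logarithms are well defined.

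Next I would pass to the limit by continuity: since $t\mapsto \log(1/t)^{k}$ is continuous at $t=h^{*}(\epsilon)>0$, we obtain $\lim_{N\to\infty}S_{N}(\epsilon)=\log\bigl(1/h^{*}(\epsilon)\bigr)^{k}$. For $k=1$ this is literally the number $S(\epsilon)=\log\bigl(1/h^{*}(\epsilon)\bigr)$ of Definition 1, matching the verbal "on a segment" description given there; for general $k$ it is the same quantity understood with the natural $k$-th power, consistent with the earlier observation that the number of retained grid points is $(1/h^{*}(\epsilon))^{k}$.

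The limit computation itself is routine, so I expect the only real point needing care to be the reconciliation of the exponent $k$ between the two formulas. The discrete complexity $S_{N}(\epsilon)$ in (5) carries the full $k$-dimensional count $N^{k}/[h^{*}(\epsilon)N]^{k}$, giving a limit $\log\bigl(1/h^{*}(\epsilon)\bigr)^{k}$, whereas Definition 1 writes $S(\epsilon)$ without the power. The main obstacle is therefore to fix a consistent convention for the individual-function complexity in $\mathbb{R}^{k}$ (either by reading Definition 1 in its one-dimensional form, or by agreeing that $S(\epsilon)=\log\bigl(1/h^{*}(\epsilon)\bigr)^{k}$ in dimension $k$); once this is settled, the identity $\lim_{N\to\infty}S_{N}(\epsilon)=S(\epsilon)$ follows immediately from the integer-part squeeze and continuity of the logarithm, with no further estimates.
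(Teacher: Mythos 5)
Your proof is correct and is exactly the direct verification the paper intends: the paper offers no argument beyond the remark that the theorem ``follows directly from (5)'', and your integer-part squeeze $h^*(\epsilon)-1/N<[h^*(\epsilon)N]/N\le h^*(\epsilon)$ followed by continuity of the logarithm is that verification spelled out. Your observation that the limit of (5) is $\log\bigl(1/h^*(\epsilon)\bigr)^k$ while Definition 1 gives $S(\epsilon)=\log\bigl(1/h^*(\epsilon)\bigr)$ without the power $k$ is a genuine discrepancy that the paper glosses over; as written the identity holds only for $k=1$, or after adopting the convention you describe for the individual-function complexity in dimension $k$.
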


The growths of  $N$ means the growths  in the sampling frequency if the function domain is fixed. Therefore, in the case of sufficiently high sampling frequency of the function,  the $\epsilon$-complexity of the sample  calculated over the discrete set of values   is not very different from the true $\epsilon$-complexity.

Of course, the question arises what should be the sampling frequency to make this difference is quite small, but if we are
dealing with the data obtained with the same sampling frequency , this question is not essential.
In any case we must bear in mind that the comparison of functions in the case of discrete set of values can be performed only when the sampling frequency is the same.

Given the above,  we can formulate the conjecture that for the H\"older class functions
(compare with (4)) the following equality should be true
$$
S_{\scriptscriptstyle N}(\epsilon)=\mathcal{A}+\mathcal{B} \log \epsilon
\eqno(6)
$$

\section{Estimation of the complexity coefficients. Basic Conjecture.}
When processing real data,  we usually have to deal with functions defined by their
values in a discrete set of points. Therefore the algorithm to estimate  complexity is focused on this situation.

Suppose we are given an array of size $N$ of function values. Let  us
choose a number $0<\mathbb{S}<1$, and  discard from the array
$[(1-\mathbb{S})N]$  values.  In the next step we use the remaining
    $[\mathbb{S}N]$ values to approximate the  values of the function for all discarded points using a  collection $\mathcal{F}$   of approximation   methods, and find the best
approximation (the approximation with the smallest error).

Two factors have to be taken into account.
First,  the remaining  points
should be distributed relatively uniformly. Second, since the error of the
approximation
 depends on the  location of the
remaining points,  for the sake of the stability of the method it is expedient,
  for a given percentage of removed points,  to choose different selection schemes
   and average the corresponding minimal approximation errors over them.
 This will allow us to smooth out the unavoidable random errors in the calculations.

Thus, for given values of $\mathbb{S}$ we  determined  the value of
minimal error $\epsilon$  of the function recovery.
It is obvious that for any $\mathbb{S}> 0$ the error of the function recovery tends to zero as $N \to \infty$ (we always assume that the grid is  uniform). On the other hand, if the sample size $N$ is too small, then   estimation  of the   recovery error  will be affected by  calculations errors
even for  values of $\mathbb{S}$ close to 1.

For this reason and based on the previous one (see (4),(6)), we can state the following \emph{basic conjecture}:

\begin{hypothesis}
For any function from the H\"older class given by its discrete values, we can specify a sample  size  $N$ of the data such that with this size there exists  an interval $[\alpha; \beta]$, $0<\alpha \le \mathbb{S} \le \beta < 1$. Within this interval the following relationship holds:
$$
\log \epsilon=\mathbb{A}+\mathbb{B} \log\mathbb{S},
\eqno(7)
$$
where $\epsilon$ is the minimal error of the function recovery by given set of reconstruction methods.
\end{hypothesis}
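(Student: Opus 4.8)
\emph{Plan.} The idea is to treat the basic conjecture as a \emph{change of variables} that transports the affine law already conjectured for $S_{\scriptscriptstyle N}(\epsilon)$ in equation (6) onto the observable pair $(\mathbb{S},\epsilon)$ produced by the algorithm. The first step is to build the dictionary between the retained fraction $\mathbb{S}$ and the discrete complexity. By Definition 5 one has $S_{\scriptscriptstyle N}(\epsilon)=\log n^{*}$, where $n^{*}$ is the number of values that suffice for reconstruction with error $\epsilon$. The algorithm, by construction, retains exactly a fraction $\mathbb{S}$ of the whole sample, so at the threshold where the measured minimal error equals $\epsilon$ we have $n^{*}=\mathbb{S}\cdot M$, with $M$ the fixed total number of samples. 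Hence
$$
S_{\scriptscriptstyle N}(\epsilon)=\log M+\log\mathbb{S}=C_{\scriptscriptstyle N}+\log\mathbb{S},
$$
where $C_{\scriptscriptstyle N}=\log M$ depends only on $N$. This identity is the heart of the matter: it says that, up to an additive constant, the logarithm of the retained fraction \emph{is} the discrete $\epsilon$-complexity.

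Second, I would import the affine relation (6), $S_{\scriptscriptstyle N}(\epsilon)=\mathcal{A}+\mathcal{B}\log\epsilon$, whose pedigree runs through the limit theorem $\lim_{N\to\infty}S_{\scriptscriptstyle N}(\epsilon)=S(\epsilon)$ and the affine law (4) for the complexity (the Corollary together with Conjecture 1). Equating the two expressions for $S_{\scriptscriptstyle N}(\epsilon)$ gives
$$
C_{\scriptscriptstyle N}+\log\mathbb{S}=\mathcal{A}+\mathcal{B}\log\epsilon,
$$
and solving for $\log\epsilon$ yields exactly the claimed relation $\log\epsilon=\mathbb{A}+\mathbb{B}\log\mathbb{S}$, with $\mathbb{B}=1/\mathcal{B}$ and $\mathbb{A}=(C_{\scriptscriptstyle N}-\mathcal{A})/\mathcal{B}$. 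Thus (7) is, formally, nothing more than (6) read in the coordinates natural to the algorithm.

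The third step is to delimit the window $[\alpha,\beta]$ in which this idealized identity is actually observable. The upper endpoint $\beta<1$ is forced because, as $\mathbb{S}\to 1$, almost no points are discarded, the genuine H\"older reconstruction error $\omega(\sqrt{k}h/2)$ from Theorem 3.1 becomes negligible, and the measured $\epsilon$ is then dominated by discretization and round-off, so the affine law is masked by numerical noise; I would fix $\beta$ by requiring the H\"older term to exceed the noise floor by a prescribed factor. The lower endpoint $\alpha>0$ is forced by the hypothesis $[h^{*}(\epsilon)N]^{k}\gg 1$ of Definition 5 and by the need for the retained set to remain dense enough (and uniform enough, after averaging over selection schemes) for the family $\mathcal{F}$ to realize the minimax error; for $\mathbb{S}<\alpha$ the retained points no longer resolve the function on the relevant scale. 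Both endpoints should come out as explicit functions of $N$, $k$, and the H\"older data $(L,p)$.

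\textbf{Main obstacle.} The derivation above is \emph{conditional}: its second step relies on (6), which in turn rests on Conjecture 1, the still-unproven affine behaviour of the $\epsilon$-complexity of an \emph{individual} H\"older function. What is actually proved is only the class statement (4) and the one-sided bound $S_x(\epsilon,H)\le S_{cl}^{U}(\epsilon,H)$; the minimax analysis of Theorem 3.1 controls the \emph{worst} function in the class, not the scaling of the \emph{best} error over $\mathcal{F}$ for a fixed $x_{0}$. Converting the plan into a theorem would therefore require first establishing the individual-function affine law, and then supplying the finite-$N$ error analysis that rigorously separates the H\"older-scaling regime from the numerical-noise regime, thereby pinning down $\alpha$ and $\beta$. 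The auxiliary heuristics---relatively uniform retention and averaging of the minimal errors across selection schemes---must likewise be upgraded to genuine bounds, since the identity $n^{*}=\mathbb{S}M$ is exact only for the idealized threshold and must be shown to survive the averaging that the algorithm performs.
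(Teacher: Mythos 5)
The statement you were asked to prove is presented in the paper as its \emph{basic conjecture}, not as a theorem: the paper gives no proof, only the heuristic that $\log\mathbb{S}$ is the discrete analogue of the $\epsilon$-complexity so that (7) is just (6) (and ultimately (4)) rewritten in the algorithm's coordinates, plus an appeal to preliminary computational experiments. Your proposal formalizes exactly that same heuristic (via $n^{*}=\mathbb{S}M$ and substitution into (6)) and correctly flags the same unresolved dependencies — the unproven individual-function affine law of Conjecture 1 and the missing finite-$N$ analysis pinning down $[\alpha,\beta]$ — so it matches the paper's own justification and, like the paper's, stops short of being a proof.
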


Let us explain the relationship (7). According to the definition, the $\epsilon$-complexity is a logarithm  of the number of function
values needed to reconstruct the function with the  error
 $\epsilon$.
 Therefore, according to (4) and (6) we take log only for  $\epsilon$.
 In the case of discrete data we deal with the value $\mathbb{S}$ and analogy of the $\epsilon$-complexity in that case is  $\log\mathbb{S}$.

Our  preliminary results of computational experiments
show
that the relationship (7) holds fairly well. The
description of the computational experiments and
simulations are in preparation and will be presented in a separate publication.


\begin{rem}
It follows from the main hypothesis that there exists a correspondence between any H\"older function and the parameters $(\mathbb{A},\mathbb{B})$ of its $\epsilon$-complexity. But this correspondence is not one-to-one. Thereupon there is a question whether it is possible to distinguish between the functions with the nearest parameters $(\mathbb{A},\,\mathbb{B})$? It is useful to consider the discrete analogues of the derivatives (i.e., the corresponding differences of the order $i,\,i=1,\dots,s)$. These analogues can be obtained from the initial sample and then for these differences it is necessary to find complexity parameters $\{\mathbb{A},\,\mathbb{B}\}_{i=1}^s$. These additional parameters will improve distinguishability of functions with the close complexity parameters.
\end{rem}

\section{Conclusion}

 In this paper we proposed a formal definition of  the $\epsilon$-complexity of a continuous function   defined on a unite cube in  a finite-dimensional space. This definition is agreed with the idea of  Kolmogorov complexity of objects. Roughly speaking, the $\epsilon$-complexity of a continuous function  can be estimated by the  fraction of the function   values which  is required to reconstruct the function  with given error $\epsilon$  and with given set of approximation methods.

  We show that the $\epsilon$-complexity has an effective   characterization, due to the detected {\it affine dependance}: the $\epsilon$-complexity of an individual function of the H\"older class  can be characterized by the pair of real numbers which we called the \emph{complexity  coefficients}.

 It has potential to be used for the  problem of segmentation of  time series and classification problem.
  All known methods of non-homogeneous data segmentation are based on information about changing  probabilistic distributions (in case of probabilistic generating mechanisms) or  models of generating mechanisms (in case deterministic or mixed mechanisms).
 If  the  time series is generated by different mechanisms (either  probabilistic, or deterministic)  in different time intervals,  complexity coefficients  can be used as  "internal"  characteristics of the function.
    Therefore it will enable us to detect  changes of data generating mechanisms \emph{using only the "internal" characteristics of a function (i.e., $\epsilon$-complexity)}.

\end{document}